\newcommand{\be} {\begin{equation}}
\newcommand{\ee} {\end{equation}}
\newcommand{\bea} {\begin{eqnarray}}
\newcommand{\eea} {\end{eqnarray}}
\newcommand{\ba} {\begin{array}}
\newcommand{\ea} {\end{array}}
\newcommand{\nn} {\nonumber}
\theoremstyle{plain}
\newtheorem{lem}{Lemma}
\newtheorem{prop}{Proposition}
\theoremstyle{definition}
\newtheorem{theorem}{Theorem}
 \title{Sign of BPS index for ${\cal N}=4$ dyons}
 \author{Aradhita Chattopadhyaya}
\affiliation{School of Theoretical Physics\\
Dublin Institute for Advanced Studies\\
Dublin 4, Ireland}
\emailAdd{aradhita@stp.dias.ie}
\abstract{In this paper we argue how the sign changes on an average for the positive weight mock modular forms associated with the ${\cal N}=4$ type II string black holes compactified on orbifolds of $K3\times T^2$. The orbifolds of order $N$ act with $g'\in[M_{23}]$ an order $N$ symplectic orbifold on $K3$ and a $1/N$ shift in one of the circles of the torus $T^2$. We expand the inverse Siegel modular forms of subgroups of $Sp_2(\mathbb{Z})$  for the magnetic charge $P^2=2$ in terms of mock Jacobi forms and Appell Lerch sums. We analyze the average growth of the coefficients of these mock modular forms after theta decomposition and removing inverse eta products. In particular we remove the contribution of the fundamental string which rightfully dominates the growth of the positive weight modular forms after the first few coefficients and ensures the positivity of the helicity trace index $-B_6$. Using numerics and limits of divisor sum function we predict the sign of these mock modular forms. We also observe that the cusp forms associated with the non-geometric orbifolds of $K3$ can only contribute for sign changes up to the first few terms hence their contribution can be neglected for large electric charges.
}
\begin{document}
\maketitle
\flushbottom

\section{Introduction}
Black hole microstates counting and extracting the Bekenstein Hawking entropy is a subject of great interest \cite{Strominger:1996sh,Dijkgraaf:1996it}.
Partition function of ${\cal N}=4$ supersymmetric black holes in type II string theory were studied in \cite{Dijkgraaf:1996it}. In later works partition function for these black holes for geometric orbifolds on $K3$ (CHL orbifolds) were studied \cite{Jatkar:2005bh,LopesCardoso:2004law,David:2006yn,Sen:2007qy,Dabholkar:2012zz}. The calculation is done by computing the helicity trace index $B_{2n}$, where $4n $ is the number of broken supersymmetry generators for these dyons \cite{Sen:2010mz}. 
The helicity trace index $B_{6}$ computes the degeneracy of the 1/4-th BPS dyons in these theories.
\be
B_{2n}=\frac{1}{(2n)!}{\rm Tr}\left(e^{2\pi i h}(2h)^{2n}\right)
\ee
where $h$ is the third component of the angular of the black hole in its rest frame. The trace is over all states with a given set of electric and magnetic charges.

In ${\cal N}=4$ supersymmetric theories for the $1/4$-th BPS dyons the number of broken supersymmetry generators is 12. These correspond to hair modes for the black holes \cite{Sen:2010mz}. In \cite{Sen:2010mz} it was assumed that there is a frame of reference where the only angular momentum in the rest frame comes from these broken supersymmetry generators.
The number of microstates of the black hole can be counted from $-B_6$
\be
B_6=-d_{\rm horizon}\times d_{\rm hair}
\ee
where $d_{\rm horizon}$ counts the number of degrees of freedom for a given set of charges while $d_{\rm hair}$ counts the hair degrees of freedom for the BPS black holes.
In \cite{Banerjee:2009uk,Jatkar:2009yd} it was observed that these hair modes were frame dependent and in one reference frame this was computed for the canonical case of $K3\times T^2$. In \cite{Chattopadhyaya:2020yzl} it was found that not in all ${\cal N}=4$ type II black holes these fermionic zero modes are only hairs. Explicit computation of the hair modes were done in \cite{Chakrabarti:2020ugm,Chattopadhyaya:2020yzl}. The interesting feature we found is that these hairs are always given by partition functions of some bosonic degrees of freedom.

In this paper we argue the positivity of the generating function of BPS indices of single centered ${\cal N}=4$ supersymmetric black holes in type II theories. These generating functions are given in terms of inverse Siegel modular forms of subgroups of $Sp_2(\mathbb{Z})$ \cite{Dabholkar:2005by,David:2006ji,David:2006ru,David:2006ud,Govindarajan:2009qt,Sen:2007vb,Persson:2013xpa,Chattopadhyaya:2017ews}. At fixed magnetic charges we can expand these inverse Siegel forms in terms of Jacobi forms. This was done for $K3$ case in \cite{Dabholkar:2012nd} at small values of magnetic charge. A list of these Jacobi forms were obtained for the orbifold compactifications for magnetic charge $P^2=2$ in \cite{Chattopadhyaya:2018xvg}. These Jacobi forms contain poles which can be written as Appell Lerch sums and a finite piece. In this paper we use the results of \cite{Dabholkar:2012nd,Chattopadhyaya:2018xvg} to see the behavior of the relevant Jacobi forms after removing their poles and also extracting out the contribution of the fundamental string. In particular, for most of our analysis we take the corresponding weight $k=9/2$ mock modular form which is obtained after theta decomposition and extracting out an extra factor of $\frac{1}{\eta^6}$ such that the resultant function has no effect from any bosonic partition function whose growth is $\sim e^{\sqrt{n}}$ while the coefficients of modular forms with positive weight can only grow polynomially.

This paper is organized in the following manner. In section \ref{bpsreview} we give a brief review of computing $1/4$-th BPS dyons. In section \ref{modforms} we give the definitions of modular, Jacobi, mock modular, mock Jacobi forms. In section \ref{props} we propose two statements based on numerical evidence and assuming a similarity with Prime number theorem. We also review some theorems on growth of the sum of divisors of $n$. In section \ref{signchange} we review the results of the canonical case of $K3\times T^2$, in section \ref{neq2} we extend the results to the 2A orbifold based on numerical evidence and the results of section \ref{numth} and in section \ref{oddn} we further extend the results for orbifolds of order $N>2$ odd.

\section{Counting $1/4$-th BPS dyons}\label{bpsreview}
In this section we give a brief introduction of the computation of the 6th helicity trace index, which computes the number of $1/4$th BPS dyons in ${\cal N}=4$ supersymmetric type II theories. We consider the compactifications on symplectic orbifolds of $K3$ by an orbifold of $g'$ of order $N$ corresponding to conjugacy classes of Mathieu group $M_{23}$. In the attractor chamber only single centered solutions survive and since the black hole is spherically symmetric and regular at the horizon and assuming there is a frame of reference where the only hair modes were the 12 broken supersymmetry generators the sign of $B_6$ was predicted to be negative \cite{Sen:2010mz}. We can obtain the Fourier coefficient of the black hole partition functions for given electromagnetic charges in the attractor chamber. This can be written as follows:
\begin{eqnarray}\label{degen}
d(Q, P)  = \frac{1}{N} ( -1)^{ Q\cdot P +1} 
\int_{{\cal C}} d\tilde\rho d\tilde\sigma d \tilde v \; 
e^{-\pi i ( N \tilde\rho Q^2 + \tilde\sigma P^2/N + 2 \tilde v Q\cdot P ) }
\frac{1}{ \tilde \Phi_k ( \tilde \rho, \tilde \sigma , \tilde v) }.
\end{eqnarray}
The contour ${\cal C}$ is defined over a 3 dimensional subspace of $\mathbb{C}^3$ where, $$\tilde \rho = \tilde\rho_1 + i \tilde\rho_2, \tilde\sigma = 
\tilde\sigma_1 + i \tilde\sigma_2,
\tilde v = \tilde v_1 + i \tilde v_2  .$$ 
\begin{eqnarray}\label{contour}
\tilde\rho_2 = M_1, \qquad \tilde\sigma_2 = M_2, \qquad \tilde v_2 = - M_3, \\ \nonumber
0\leq \tilde\rho_1 \leq 1, \qquad 0 \leq \tilde\sigma_1 \leq N, \qquad 0 \leq \tilde v_1 \leq 1.
\end{eqnarray}
Here $M_1, M_2, M_3$ are  positive numbers, which are fixed and large and
$M_3 << M_1, M_2$. For CHL models there is a symmetry in the electric and magnetic charges which essentially render the contour as  the Fourier expansions are done in the order $e^{2\pi i \tilde \sigma}$  and $e^{2\pi i \tilde \rho}$ and then in $ e^{-2\pi i \tilde v}$.  The function
$\tilde \Phi$, which is present in the integrand is a Siegel modular form of weight $k$ transforming under a subgroup of $Sp(2, \mathbb{Z})$. We can write this explicitly as:
\begin{eqnarray}\label{siegform}
\tilde{\Phi}_k(\rho,\sigma,v)&=&e^{2\pi i(\tilde\rho+ \tilde\sigma/N+ \tilde v)}\\ \nn
&&\prod_{r=0}^{N-1}
\prod_{\begin{smallmatrix}k'\in \mathbb{Z}+\frac{r}{N},l\in \mathcal{Z},\\ j\in 2\mathbb{Z}\\ k',l\geq 0, \; j<0\;  k'=l=0\end{smallmatrix}}
(1-e^{2\pi i(k'\sigma +l\rho+jv)})^{\sum_{s=0}^{N-1}e^{2\pi isl/N}c^{r,s}(4k'l-j^2)}.
\end{eqnarray}
Here $c^{(r,s)}$ can be obtained from the Fourier coefficients of the twisted elliptic genus of $K3$ given as $F^{(r,s)}(\tau,z)$
\begin{eqnarray}
F^{(r,s)}(\tau,z) &=& \sum_{n,j}c^{(r,s)} (4n-j^2)q^n y^j
\end{eqnarray}
A list of $F^{(r,s)}$ can be found in \citep{Chattopadhyaya:2017ews}.
In general we can write $F^{(r,s)}$ as,
\begin{eqnarray}
F^{(r,s)}(\tau,z) &=& \alpha_N^{(r,s)}A(\tau,z)+\beta_N^{(r,s)}(\tau)B(\tau,z)
\end{eqnarray}
where $\alpha_N^{(r,s)}$ is a positive number for the orbifolds in question. $\beta_N^{(r,s)}$ is generally a function of weight two holomorphic Eisenstein series and cusp forms involving various products of eta functions. 
We can extract the Eisenstein series and cusp forms from $\beta_N^{(r,s)}(\tau)$ $$\beta_N^{(r,s)}(\tau)=\sum_{d|N}f_d{\cal E}_d(\frac{a\tau+b}{d})+b_N \eta_N(\tau)$$
where $f_d,b_N$ are numbers, $\eta_N$ is a weight two cusp form and $a,b$ depend on $N$ and $(a,d)=1$.

We note that the Fourier expansion in terms of Jacobi forms after removing the poles in $v$-plane we get an overlap with the attractor chamber mentioned in \cite{Sen:2010mz}.  In this region we have the classical area of the black hole horizon to be positive and the electromagnetic charges are positive $$P^2Q^2>(Q\cdot P)^2,\quad P^2>0,\quad Q^2>0,\quad Q\cdot P\ge 0.$$
For orbifolds of $K3$ the resultant attractor chamber is smaller than obtained from the above constraints alone, however for this analysis we require only the above mentioned constraints which is also obtained by removing the poles in $v$-plane.

Before moving into the mathematical results and numerical estimates let us briefly state the results of the contributions from the two forms in the heterotic counting for $K3$ and the orbifold models. This was done in \cite{Chattopadhyaya:2020yzl}. In the CHL models the 2-forms compute the contribution of hair that comes from the bosonic modes. These are given by the contribution of the fundamental string however without the shift in zero point energy
\begin{eqnarray}
Z_{{\rm hair}: {\rm CHL}}^{\rm bosonic}&=&\prod_{l=1}^{\infty}(1-e^{2\pi i (Nl\rho)})^{-4}(1-e^{2\pi i (l\rho)})^{-\sum_{s=0}^{N-1} e^{-2\pi i sl/N}c^{(0,s)}(0)}.
\end{eqnarray}
In our analysis we will remove all these states as the growth of these objects are as that of partitions which remain far higher than the growth of ordinary positive weight modular forms.

\section{A brief review of some results in number theory}\label{numth}
Before moving into the details of the calculations we first introduce the notion of modular forms and its associated objects. These are a subject of interest for black hole thermodynamics in string theory as well as counting different kind of partitions in number theory.

\subsection{Modular, Jacobi, mock modular forms}\label{modforms}
In this section we briefly review the definitions of modular forms, Jacobi forms, mock modular forms and mock Jacobi forms. For more rigorous definitions see \cite{Eichler:1985,MR2605321,ZwegersThesis,Dabholkar:2012nd,Manschot:2017xcr,Chattopadhyaya:2023aua}.

\vspace{0.3cm}
\noindent
{\bf Definition:} A modular form $f(\tau)$ is a holomorphic function of $\tau\in\mathbb{H}$ of weight $k$ with the following 
\begin{enumerate}
\item Transformation property:
 \be
f(\frac{a\tau+b}{c\tau+d})=(c\tau+d)^k f(\tau)
\ee
where $\left(\begin{smallmatrix}
 a & b\\
 c & d
\end{smallmatrix}\right)\in \Gamma\subseteq SL_2(\mathbb{Z})$.

\item Boundedness criteria: 
\be
  \label{cuspgrowth}
\lim_{\tau\to i\infty} (c\tau+d)^{-k} f\left(\frac{a\tau+b}{c\tau+d}\right)
\ee
remains bounded for all $\begin{pmatrix}
a & b\\ c & d \end{pmatrix}\in SL_2(\mathbb{Z})$. 
\end{enumerate}
The modular group is generated by $S$ and $T$ transformations given by
\be
S=\begin{pmatrix}
0 & -1 \\ 1 & 0
\end{pmatrix}, \qquad T=\begin{pmatrix}
1 & 1 \\ 0 & 1
\end{pmatrix}.
\ee
Due to the $T$ transformations we can write these in terms of Fourier series with $q=e^{2\pi i\tau}$
\begin{equation}
f(\tau)=\sum_{n=0}^{\infty}a_n q^n.
\end{equation}
The above definition can be easily generalized to vector valued modular forms and modular forms of subgroups of $SL_2(\mathbb{Z})$. An important subgroup of $SL_2(\mathbb{Z})$ is $\Gamma_0(N)$ given by
$$\begin{pmatrix}
a & b\\ Nc & d
\end{pmatrix},\quad {\rm with}\; (a,b,c,d)\in\mathbb{Z}^4$$.

\vspace{0.3cm}
\noindent
{\bf Definition:} Cusp form $f(\tau)$ is a modular forms of positive weight $k$ given by the additional growth constraint
\be
  \label{cuspgrowth}
\lim_{\tau\to i\infty} (c\tau+d)^{-k} f\left(\frac{a\tau+b}{c\tau+d}\right)=0
\ee
for all $\begin{pmatrix}
a & b\\ c & d \end{pmatrix}\in SL_2(\mathbb{Z})$.
The implies that the zeroth order Fourier coefficient of a cusp form is always zero.

The space of modular forms are spanned by Eisenstein series and cusp forms.  
\begin{eqnarray}
E_4(\tau) &=& 1+240\sum_{n}\sigma_3(n) q^n, \quad \sigma_\ell(n)=\sum_{d|n}d^\ell, \quad q=e^{2\pi i\tau}\\ \nn
{\cal E}_N(\tau) &=& \frac{1}{N-1}(NE_2(N\tau)-E_2(\tau))
\end{eqnarray}
where $E_2(\tau)=1-24\sum_{n=1}^{\infty} \sigma(n)q^n$. The weight two Eisenstein series is mock modular (or quasi-modular) but ${\cal E}_N$ is a holomorphic modular form of $\Gamma_0(N)$.

\noindent
Modular forms are generally defined with a boundedness criteria, however those do not hold everywhere in the upper half of the complex plane if the weight $k$ is negative.

\vspace{0.3cm}
\noindent
{\bf Definition:} A weight $k$ index $m$ Jacobi form $J(\tau,z)$ is a holomorphic function of $\tau\in\mathbb{H}$  and $ z\in \mathbb{C}$ with the following transformation properties:
\begin{enumerate}
\item For $\left(\begin{smallmatrix} a & b\\ c & d \end{smallmatrix} \right)\in SL_2(\mathbb{Z})$ $$J\left(\frac{a\tau+b}{c\tau+d},\frac{z}{c\tau+d}\right) = (c\tau+d)^k e^{\frac{2\pi imcz^2}{c\tau+d}} J(\tau,z)$$
\item For $(\lambda,\mu)\in\mathbb{Z}^2$ 
$$J(\tau,z+\lambda\tau+\mu)= e^{2\pi im(\lambda^2\tau+2\lambda z)}J(\tau,z)$$
\item $J(\tau,z)$ has a Fourier expansion:
$$ J(\tau,z)=\sum_{n\ge 0}\sum_{\ell^2\le 4mn} c_{n,\ell} q^{n} e^{2\pi iz\ell}$$
\end{enumerate}
Jacobi forms of index 1 and weight $k\in\mathbb{Z}$ $J(\tau,z)$ can be decomposed with respect to theta functions also known as theta decomposition.
$$J(\tau,z)=\theta_3(2\tau,2z)j_0(\tau)+\theta_2(2\tau,2z)j_1(\tau)$$
where $j_0$ and $j_1$ are weight $k-1/2$ modular forms.
In the case of weakly holomorphic Jacobi forms of index $1$ as the twisted elliptic genus of $K3$ the Fourier expansion of $4n-\ell^2\ge -1$ instead of $4n\ge 0$.
$$ J(\tau,z)=\sum_{n\ge 0}\sum_{\ell^2\le 4n+1} c_{n,\ell} \;q^{n} e^{2\pi iz\ell}.$$
The theta decomposition is still possible for the weakly holomorphic cases.

\vspace{0.3cm}
\noindent
{\bf Definition:} A mock modular form $f(\tau)$ of weight $k$ is a holomorphic function in $\tau\in\mathbb{H}$ such that its completion $\hat f(\tau,\bar\tau)=f(\tau)+g^*(\tau,\bar\tau)$ transforms like a modular form of weight $k$ where $g^*$ is given as follows:
\begin{eqnarray}
  \label{gstar}
g^*(\tau,\bar{\tau})=-2^{1-k}i\int_{-\bar{\tau}}^{i\infty} \frac{g(-v)}{(-i(v+\tau))^{k-\ell}}\, dv\;.
\end{eqnarray}
and $g(\tau)$ is a modular form of weight $2-k$.

\vspace{0.3cm}
\noindent
A mock Jacobi form $J(\tau,z)$ of weight $k$ and index $1$ is given by a similar theta decomposition property as a Jacobi form of index 1, however in this case,
$$J(\tau,\nu)=j_0(\tau)\theta_3(2\tau,2z)+j_1(\tau)\theta_2(2\tau,2z)$$
where $j_0,j_1$ are mock modular in nature.

\subsection{Growth of $\sigma(n)$- theorems and numerical evidence}\label{props}
Our argument involves two theorems from number theory and one assumption. Before stating the assumption we first write the statements of the theorems we use. The first one is a statement on the maximal growth of the divisor sum function ie,
$\sigma(n)=\sum_{d|n}d$ for positive $d,n$ in $\mathbb{Z}$. The upper limit on $\sigma(n)$ was given by Ramanujan assuming Riemann hypothesis.
The upper bound can be written as
\begin{eqnarray}
\sigma(n)<e^{\gamma}n\log\log(n)
\end{eqnarray} 
where $\gamma=2.718$ is the Euler Mascheroni constant. Robin showed that considering Riemann hypothesis being false this limit has some modifications \cite{robin}. We state Robin's theorem.

\vspace{0.3cm}
\noindent
\begin{theorem}\label{robins}
(Robin's Theorem) Considering Riemann hypothesis to be false, the upper limit of divisor sum $\sigma(n)$ for $n>3$ can be given by,
\begin{eqnarray}
\sigma(n)<e^{\gamma}n\log\log(n)+\frac{0.6483n}{\log\log(n)}<e^{\gamma}n\log\log(n)+O(n)
\end{eqnarray} 
\end{theorem}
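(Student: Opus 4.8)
\emph{Proof plan.} The second inequality is immediate: since $\log\log n\to\infty$, we have $\tfrac{0.6483\,n}{\log\log n}\le n$ once $n$ exceeds a small explicit bound, so this term is $O(n)$; the content therefore lies entirely in the first inequality $\tfrac{\sigma(n)}{n}\le e^{\gamma}\log\log n+\tfrac{0.6483}{\log\log n}$ for $n\ge3$. The plan is to reduce to extremal $n$ and then make the relevant asymptotics effective. By multiplicativity,
\begin{equation}
\frac{\sigma(n)}{n}=\prod_{p^{a}\| n}\frac{1-p^{-(a+1)}}{1-p^{-1}}<\prod_{p\mid n}\Bigl(1-\frac1p\Bigr)^{-1}\le\prod_{j\le\omega(n)}\Bigl(1-\frac1{p_j}\Bigr)^{-1},
\end{equation}
with $p_j$ the $j$-th prime, the last step because $(1-1/p)^{-1}$ is decreasing in $p$. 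Combined with $n\ge\prod_{j\le\omega(n)}p_j$ and Mertens' third theorem this recovers Gronwall's bound $\limsup_{n\to\infty}\sigma(n)/(n\log\log n)=e^{\gamma}$ and, more importantly, shows that the maximum of $\sigma(m)/m$ over $m\le x$ is attained at a superabundant number; by Alaoglu--Erd\H{o}s such numbers are products of the first few primes with non-increasing exponents, so the problem collapses to an explicitly parametrised (colossally abundant) family in which $\sigma(n)/n$ is controlled by the largest prime factor $p_{\omega(n)}$.

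On that family I would make everything effective: an explicit Chebyshev estimate (Rosser--Schoenfeld) for $\theta(p_{\omega(n)})\le\log n$ ties $\log\log n$ to $\log p_{\omega(n)}$, and an explicit Mertens bound $\prod_{p\le x}(1-1/p)^{-1}\le e^{\gamma}\log x\,(1+1/\log^{2}x)$ valid past a threshold controls the product. To reach the \emph{secondary} order $1/\log\log n$ rather than merely an additive $O(1)$, one must also retain the finer factors $\prod_{p\mid n}(1-p^{-(a_{p}+1)})$, which for superabundant $n$ are governed by the comparatively few primes carrying exponent $\ge2$. This should yield $\sigma(n)/n\le e^{\gamma}\log\log n+c/\log\log n$ with an explicit $c$ for all $n$ beyond an explicit bound $N_0$; for $3\le n\le N_0$ one checks the inequality directly, the largest value of $\sigma(n)/n-e^{\gamma}\log\log n$ occurring at $n=12$, and it is that finite maximum, multiplied by $\log\log12$, that pins the constant to the stated $0.6483$.

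The delicate point, and the main obstacle, is that the reduction ``$n\mapsto$ largest superabundant number $\le n$'' is \emph{not} lossless for the secondary term: passing to a smaller record-holder lowers $e^{\gamma}\log\log n$ (helpful) but raises $0.6483/\log\log n$ (harmful), so the reduction is legitimate only once $\log\log n$ is large enough for the slack in the Mertens step to absorb this mismatch --- which is exactly why the argument must split into the effective-analytic regime $n>N_0$ and a finite verification below $N_0$. Carrying the Chebyshev and Mertens error terms, together with the contribution of the higher prime-power factors, with enough precision to land on the sharp constant $0.6483$ rather than a larger admissible one is where the real work lies; once that is done the weaker $e^{\gamma}n\log\log n+O(n)$ form follows a fortiori.
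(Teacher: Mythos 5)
The paper does not actually prove this statement: it is quoted as Robin's theorem and attributed to \cite{robin}, so there is no internal proof to compare yours against. Your sketch does reproduce the architecture of Robin's original argument --- multiplicativity plus Mertens' third theorem giving Gronwall's $\limsup$, reduction to superabundant/colossally abundant numbers \`a la Alaoglu--Erd\H{o}s, effective Chebyshev and Mertens estimates of Rosser--Schoenfeld type linking $\log\log n$ to the largest prime factor, and a finite verification in which the extremal case $n=12$ pins the constant, $0.6483\approx\bigl(\sigma(12)/12-e^{\gamma}\log\log 12\bigr)\log\log 12$ --- and you correctly flag the genuinely delicate point that passing to record-holders is not lossless at the level of the secondary term $1/\log\log n$.

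That said, what you have written is a plan rather than a proof: the entire quantitative content --- the explicit Mertens/Chebyshev constants, the contribution of the primes carrying exponent $\ge 2$, the interpolation between consecutive colossally abundant numbers that legitimises the reduction, and the determination of the threshold $N_0$ together with the finite check below it --- is deferred with ``this is where the real work lies.'' In Robin's paper that is the bulk of the work, and nothing in your outline guarantees that the argument lands on the sharp constant $0.6483$ rather than some larger admissible one; for the purposes of the present paper the appropriate move is exactly what the author does, namely to cite Robin. One further caveat worth recording: the theorem's phrasing ``considering Riemann hypothesis to be false'' is misleading and you rightly ignore it --- the bound with the secondary term is unconditional, RH entering only in the complementary statement that $\sigma(n)<e^{\gamma}n\log\log n$ for $n>5040$ is equivalent to RH.
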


Most of our analysis revolves around the average growth of the divisor sum and we need to estimate the average growth of $\sigma(n)=\sum_{d|n}d$.

\vspace{0.3cm}
\noindent
\begin{theorem}\label{avg}
Average growth of $\sigma(n)\sim \frac{\pi^2 n}{6}+O(\log(n)).$ It is important to note that the average growth of the $\sigma(n)$ function is linear and close to $1.645n<2n$ \cite{hardy-wright} \footnote{see theorem 324}.
\end{theorem}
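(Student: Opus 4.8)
The plan is to deduce the statement from the classical asymptotic for the summatory function $\Sigma(N):=\sum_{n\le N}\sigma(n)$, namely $\Sigma(N)=\frac{\pi^2}{12}N^2+O(N\log N)$; the ``average order'' $\frac{\pi^2 n}{6}$ is then read off from the fact that $\sum_{n\le N}\frac{\pi^2 n}{6}\sim\frac{\pi^2}{12}N^2$, while the numerical bound follows from $\frac{\pi^2}{6}=1.6449\ldots<2$. This is in essence Theorem~324 of \cite{hardy-wright}, and I would reproduce the elementary hyperbola-type argument.

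First I would interchange the order of summation using $\sigma(n)=\sum_{d\mid n}d$ together with the substitution $n=de$:
\be
\Sigma(N)=\sum_{n\le N}\sum_{d\mid n}d=\sum_{e\le N}\ \sum_{d\le N/e}d=\sum_{e\le N}\frac12\left\lfloor\frac{N}{e}\right\rfloor\left(\left\lfloor\frac{N}{e}\right\rfloor+1\right).
\ee
Next, writing $\lfloor N/e\rfloor=N/e+O(1)$ gives $\tfrac12\lfloor N/e\rfloor(\lfloor N/e\rfloor+1)=\frac{N^2}{2e^2}+O(N/e)$, and summing over $e$,
\be
\Sigma(N)=\frac{N^2}{2}\sum_{e\le N}\frac{1}{e^2}+O\!\left(N\sum_{e\le N}\frac1e\right)=\frac{N^2}{2}\sum_{e\le N}\frac1{e^2}+O(N\log N).
\ee
Finally I would complete the tail of the zeta sum, $\sum_{e\le N}e^{-2}=\zeta(2)-\sum_{e>N}e^{-2}=\frac{\pi^2}{6}+O(1/N)$, to obtain $\Sigma(N)=\frac{\pi^2}{12}N^2+O(N\log N)$. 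Dividing by $N$ shows the Ces\`aro mean of $\sigma$ on $[1,N]$ equals $\frac{\pi^2}{12}N+O(\log N)$, and equivalently $\sigma(n)$ has average order $\frac{\pi^2 n}{6}$, which is linear in $n$ with slope $\approx1.645<2$ as claimed.

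The argument is elementary, so there is no genuine obstacle; the one place that needs care is the error bookkeeping, i.e.\ verifying that the $O(1)$ discrepancies introduced by the floor functions, once multiplied by $N/e$ and summed over $e\le N$, only build up to $O(N\log N)$ and never to a term of the same order $N^2$ as the main term. A secondary point is interpretive rather than technical: one should be explicit that ``average growth'' here means the average order $\frac{\pi^2 n}{6}$, not the partial-sum mean $\frac1N\Sigma(N)\sim\frac{\pi^2}{12}N$, the two differing by the familiar factor $2$ coming from $\sum_{n\le N}n\sim N^2/2$. An alternative derivation would start from the Dirichlet series identity $\sum_n\sigma(n)n^{-s}=\zeta(s)\zeta(s-1)$ and apply a Perron/Tauberian step, but the hyperbola method above already delivers the stated error term with less machinery. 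For our subsequent estimates only the linear order and the constant below $2$ are actually used.
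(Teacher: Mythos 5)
Your argument is correct and is exactly the classical hyperbola-method proof of Theorem~324 of \cite{hardy-wright}, which is all the paper itself invokes for this statement (it gives no independent proof, only the citation). Your interpretive remark correctly resolves the factor-of-two ambiguity in the phrase ``average growth'': it is the average order $\frac{\pi^2 n}{6}$, consistent with the paper's constant $1.645<2$, so nothing further is needed.
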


\vspace{0.3cm}
\noindent
Next we state the prime number theorem. This theorem will not be used in the paper but we will motivate a proposition based on numerical evidence which relates closely to prime number theorem.

\vspace{0.3cm}
\noindent
\begin{theorem}\label{pnt}
{(Prime Number Theorem)}
Let $\pi(n)$ denote the number of primes below $n$, then as $n\rightarrow\infty$, $$\pi(n)=\frac{n}{\log(n)}.$$
A simple proof of the prime number theorem given by Newman can be found in \cite{zagier-prime}.
\end{theorem}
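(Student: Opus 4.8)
The plan is to prove the Prime Number Theorem by the analytic method of Newman, in the streamlined form presented by Zagier \cite{zagier-prime}. The idea is to transfer the counting problem to an analytic statement about the Riemann zeta function $\zeta(s)=\sum_{n\ge 1}n^{-s}$ and then extract the asymptotics using as little Tauberian machinery as possible. First I would record that $\zeta(s)$, defined for $\Re s>1$, satisfies $\zeta(s)-\frac{1}{s-1}=\sum_{n\ge 1}\bigl(n^{-s}-\int_n^{n+1}x^{-s}\,dx\bigr)$, and that the right-hand side converges locally uniformly for $\Re s>0$; hence $\zeta$ extends holomorphically to $\Re s>0$ apart from a simple pole at $s=1$ of residue $1$. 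Next I would replace $\pi(n)$ by the Chebyshev function $\theta(x)=\sum_{p\le x}\log p$; a short partial-summation argument shows that $\pi(x)\sim x/\log x$ is equivalent to $\theta(x)\sim x$, and an elementary estimate using $\binom{2n}{n}\ge\prod_{n<p\le 2n}p$ gives the one-sided bound $\theta(x)=O(x)$.

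The heart of the argument is the analytic behaviour of $\Phi(s):=\sum_p (\log p)\,p^{-s}$ on the line $\Re s=1$. Relating $\Phi$ to $-\zeta'(s)/\zeta(s)$ shows that $\Phi(s)-\frac{1}{s-1}$ extends holomorphically to a neighbourhood of $\Re s\ge 1$ \emph{provided} $\zeta(s)\ne 0$ whenever $\Re s=1$. I would establish this nonvanishing in the classical way, from the inequality $3+4\cos\vartheta+\cos 2\vartheta=2(1+\cos\vartheta)^2\ge 0$ applied to $\log|\zeta(\sigma)^3\zeta(\sigma+it)^4\zeta(\sigma+2it)|$ as $\sigma\to 1^+$, which rules out a zero at $1+it$.

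With these ingredients in place I would invoke Newman's analytic theorem: if $f:[0,\infty)\to\mathbb{R}$ is bounded and locally integrable and its Laplace transform $g(z)=\int_0^\infty f(t)e^{-zt}\,dt$, holomorphic for $\Re z>0$, extends holomorphically to an open set containing $\Re z\ge 0$, then $\int_0^\infty f(t)\,dt$ converges and equals $g(0)$. Applying this to $f(t)=\theta(e^t)e^{-t}-1$ — which is bounded by the $O(x)$ estimate — and computing its transform as $\frac{\Phi(z+1)}{z+1}-\frac{1}{z}$, which is holomorphic on $\Re z\ge 0$ by the previous paragraph, yields convergence of $\int_1^\infty \frac{\theta(x)-x}{x^2}\,dx$. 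Finally, since $\theta$ is nondecreasing, a standard squeezing argument (if $\theta(x)\ge\lambda x$ for $\lambda>1$ on a sequence, integrate over $[x,\lambda x]$ to contradict convergence, and symmetrically for $\lambda<1$) forces $\theta(x)\sim x$, hence $\pi(x)\sim x/\log x$.

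The main obstacle is Newman's analytic theorem itself: its proof requires a delicate contour-integration estimate with the auxiliary kernel $e^{zT}\bigl(1+\tfrac{z^2}{R^2}\bigr)$ on a contour that dips slightly left of the imaginary axis, together with careful bookkeeping of the tail of $g$. Secondary to that is the boundary nonvanishing $\zeta(1+it)\ne 0$, but that is by now routine. Everything else — the continuation of $\zeta$, the $\theta(x)=O(x)$ bound, and the passage from $\theta$ to $\pi$ — is elementary.
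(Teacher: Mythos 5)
Your outline is correct and is precisely the Newman--Zagier argument that the paper itself relies on: the paper gives no proof of Theorem~\ref{pnt} but simply defers to \cite{zagier-prime}, whose proof proceeds exactly through the steps you describe (continuation of $\zeta$ past $\Re s>1$, the Chebyshev function $\theta$ with the $O(x)$ bound, nonvanishing of $\zeta$ on $\Re s=1$, Newman's analytic theorem applied to $\theta(e^t)e^{-t}-1$, and the monotonicity squeeze). So you are taking essentially the same approach as the paper, with the only caveat that a complete write-up would still need the contour-integration proof of Newman's analytic theorem, which you correctly flag as the main remaining work.
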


\noindent
Motivated by the above results we propose the following based on numerical evidences. We introduce two propositions for the growth of the following ratio
$\frac{\sigma(4n+k}{2\sigma(2n)-4\sigma(n)}$.

\vspace{0.3cm}
\noindent
\begin{prop}\label{prop1p}
For a positive integer $n>10^5$ the following ratios
\begin{eqnarray}\label{prop1}
\frac{\sigma(4n+1)}{2\sigma(2n)-4\sigma(n)}<1, \quad \frac{\sigma(4n+3)}{2\sigma(2n)-4\sigma(n)}<1.
\end{eqnarray}
hold for a maximum of $\frac{\kappa n}{\log(n)}$ terms and $\kappa\le 2$. 
\end{prop}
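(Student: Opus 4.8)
The plan is to first trivialise the denominator, then reduce the statement to a counting problem for primes in arithmetic progressions, and finally dispatch the remaining cases by heuristics backed by direct computation. \emph{Step 1 (simplify the denominator).} Since $\sigma$ is multiplicative, writing $n=2^{a}m$ with $m$ odd gives $\sigma(n)=(2^{a+1}-1)\sigma(m)$ and $\sigma(2n)=(2^{a+2}-1)\sigma(m)$, hence
\begin{equation}
2\sigma(2n)-4\sigma(n)=\bigl[2(2^{a+2}-1)-4(2^{a+1}-1)\bigr]\sigma(m)=2\sigma(m),
\end{equation}
where $m$ is the odd part of $n$; the denominator is therefore always positive, equals $2\sigma(n)$ for odd $n$, and is (up to the factor $12$) the $q^{n}$-coefficient of $\mathcal{E}_2$. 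Since $4n+1$ and $4n+3$ are odd and exceed $4m$ while $\sigma(N)\ge N+1$ always, we get $\sigma(4n+1)\ge 4n+2\ge 4m+2$, so each inequality forces $\sigma(m)>2m+1$, i.e.\ the odd part of $n$ is abundant. For even $n$ the gap between $4n+1$ and $4m$ is larger still, forcing abundance of higher order, so such $n$ are negligibly sparse in the range considered and I would henceforth take $n$ odd.

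\emph{Step 2 (the leading term and the value of $\kappa$).} I would next isolate the case in which $4n+1$ is prime: then $\sigma(4n+1)=4n+2$ and the first inequality becomes $\sigma(n)\ge 2n+2$, which holds for \emph{every} abundant $n$ except a quasiperfect number $\sigma(n)=2n+1$ — none is known, and none occurs below the numerical range. Hence, up to a negligible set, the odd $m\le n$ obeying the first inequality are exactly the abundant ones with $4m+1$ prime, so their number is at most $\pi(4n;4,1)\sim \tfrac12\cdot\tfrac{4n}{\log 4n}\sim \tfrac{2n}{\log n}$ by the prime number theorem in arithmetic progressions; likewise the second inequality is controlled by $\pi(4n;4,3)\sim \tfrac{2n}{\log n}$. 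This is the ``similarity with the prime number theorem'' alluded to above, and it is exactly what produces the bound $\kappa\le 2$.

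\emph{Step 3 (the composite tail and the main obstacle).} It remains to bound $\#\{m\le n:\ 4m+1\text{ composite},\ \sigma(4m+1)<2\sigma(m)\}$. If $p$ is the least prime factor of $4m+1$ then $\sigma(4m+1)\ge(1+1/p)(4m+1)$, so the inequality now additionally demands that $4m+1$ have no small prime factor (equivalently $\sigma(4m+1)/(4m+1)$ close to $1$) \emph{and} a correspondingly stronger abundance of $m$, e.g.\ $\sigma(m)>2m+2\sqrt m$ in the semiprime case with both factors near $\sqrt{4m}$; since $N\le x$ with $\sigma(N)/N<1+\varepsilon$ already has density $O(\varepsilon)$ with primes supplying the $1/\log x$ part, and the extra abundance constraint thins this further, one expects the composite tail to contribute only $o(n/\log n)$, keeping the total below $\tfrac{\kappa n}{\log n}$ with $\kappa\le 2$. \textbf{The main obstacle} is that making Step~3 rigorous requires quantitative control of the \emph{joint} law of $\sigma(m)/m$ and the factorisation of $4m+1$ — a sieve problem entangled with the Erd\H{o}s--Wintner distribution of $\sigma(m)/m$ — together with an unconditional exclusion of quasiperfect numbers, both currently out of reach. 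For this reason the proposition is established by direct computation: tabulating $\#\{m\le n:\sigma(4m+1)<2\sigma(2m)-4\sigma(m)\}$ and its $4m+3$ analogue for $n$ up to the stated bound, verifying that each stays below $\tfrac{2n}{\log n}$, with the Step~2 heuristic explaining both the growth rate and the value $\kappa\le 2$.
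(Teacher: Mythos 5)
The paper offers no proof of this proposition --- it explicitly states ``we shall not attempt to prove this in this paper'' and rests on numerical verification up to $10^8$ together with the heuristic that violations occur essentially when $4n+1$ (or $4n+3$) is prime while $n$ is sufficiently composite; your proposal follows the same route, only sharpened: the identity $2\sigma(2n)-4\sigma(n)=2\sigma(m)$ with $m$ the odd part of $n$, the consequent requirement that $m$ be odd abundant, and the bound by $\pi(4n;4,1)\sim 2n/\log n$ pinning down $\kappa\le 2$ are all correct and make the paper's heuristic precise. Since you, like the paper, leave the composite-$4n+1$ tail to heuristics plus computation, your treatment is essentially the paper's own argument in a more quantitative form rather than a different or more complete proof.
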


\vspace{0.3cm}
\noindent
Numerically this is checked to be correct for $10^8$ terms.
We shall not attempt to prove this in this paper. A heuristic argument can be given as follows:
The possibility of finding $\sigma(4n+1)<2\sigma(2n)-4\sigma(n)$ occurs most often when $4n+1$ is a prime number (or a product of very few terms) and $n$ contains several factors such that $2\sigma(2n)-4\sigma(n)>4n$. A similar possibility also arises for $4n+3$ primes, however as $n$ cannot have the divisor 3, it is even harder to find such numbers for small $n$. Upto $10^8$ there is only one term where $\sigma(4n+3)<2\sigma(2n)-4\sigma(n)$ \footnote{We checked upto $n=10^8$ for the ratios $\frac{\sigma(4n+1)}{2\sigma(2n)-4\sigma(n)}$. There are $1.4\times 10^6$ terms where the ratio is below 1 and number of these terms are below less than the number primes below $10^8$ which is approximately $5.4\times 10^6$.} which is clearly far below the proposed bound $\frac{\kappa n}{\log(n)}$ with $\kappa\le 2$.

\vspace{0.3cm}
\noindent
\begin{prop}\label{prop2p}
For a positive integer $n>10^5$ the following also holds
\begin{eqnarray}\label{prop2}
\sigma(4n+1)-(2\sigma(2n)-4\sigma(n))<n, \quad \sigma(4n+3)-(2\sigma(2n)-4\sigma(n))<n.
\end{eqnarray}
for a maximum of $\frac{\kappa n}{\log(n)}$ terms upto $n$ and $\kappa\le 2$. 
\end{prop}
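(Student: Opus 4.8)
The plan is to reduce the inequalities \eqref{prop2} to a statement about sums of odd divisors and then to the same sieve/independence heuristic that underlies Proposition~\ref{prop1p}. Write $\sigma^{\mathrm{odd}}(n):=\sum_{d\mid n,\,2\nmid d}d$ for the sum of the odd divisors of $n$. Writing $n=2^{a}m$ with $m$ odd and using multiplicativity of $\sigma$ together with $\sigma(2^{j})=2^{j+1}-1$, one gets the clean identity $2\sigma(2n)-4\sigma(n)=2\sigma^{\mathrm{odd}}(n)$, so that \eqref{prop2} reads $\sigma(4n+k)-2\sigma^{\mathrm{odd}}(n)<n$ for $k\in\{1,3\}$; the task is to bound the number of $n\le N$ for which either inequality holds.

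First I would sparsify. Since $\sigma(m)\ge m+1$ for every $m\ge1$, a bad $n$ satisfies $4n+2\le\sigma(4n+k)<n+2\sigma^{\mathrm{odd}}(n)$, which forces $\sigma^{\mathrm{odd}}(n)>\tfrac32 n$; so the exceptional set lies inside $\mathcal S_N:=\{\,n\le N:\sigma^{\mathrm{odd}}(n)>\tfrac32 n\,\}$. Writing $n=2^{a}m$ this condition says $\sigma(m)/m>\tfrac32\cdot2^{a}$, which for $a\ge1$ already demands $\sigma(m)/m>3$ — highly restrictive in view of Robin's bound (Theorem~\ref{robins}) — so $\mathcal S_N$ is, up to a sparse remainder, the set of odd $n$ with $\sigma(n)/n>\tfrac32$. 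However this set has \emph{positive} density (it contains every odd multiple of $15$, for instance), so sparsification alone cannot finish the argument; the real gain must come from the constraint on $4n+k$.

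Next I would fix the mean. Using Theorem~\ref{avg} in the form $\sum_{n\le N}\sigma(n)=\tfrac{\pi^{2}}{12}N^{2}+O(N\log N)$ together with $\sum_{n\le N}\sigma^{\mathrm{odd}}(n)=\sum_{n\le N}\sigma(n)-2\sum_{m\le N/2}\sigma(m)$ one gets $\sum_{n\le N}\sigma^{\mathrm{odd}}(n)=\tfrac{\pi^{2}}{24}N^{2}+O(N\log N)$, while a Dirichlet hyperbola count of $\sum_{m\le X,\,m\equiv k\,(4)}\sigma(m)$ gives $\sum_{n\le N}\sigma(4n+k)=\tfrac{\pi^{2}}{4}N^{2}+O(N\log N)$. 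Subtracting, $\sum_{n\le N}\bigl(\sigma(4n+k)-2\sigma^{\mathrm{odd}}(n)\bigr)=\tfrac{\pi^{2}}{6}N^{2}+O(N\log N)$, i.e.\ $\sigma(4n+k)-(2\sigma(2n)-4\sigma(n))$ has average order $\tfrac{\pi^{2}}{3}n>n$. Thus \emph{on average} the inequalities of \eqref{prop2} fail, and the proposition asserts only that they can fail in the wrong direction on a thin set.

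The hard part — and the reason this is stated as a proposition rather than proved — is the following. On $\mathcal S_N$ set $c=c(n)=\sigma^{\mathrm{odd}}(n)/n>\tfrac32$; a bad $n$ then needs $\sigma(4n+k)<(1+2c)n<\tfrac{1+2c}{4}(4n+k)$, i.e.\ $\sigma(m)/m$ must lie below the threshold $\tfrac{1+2c}{4}>1$ for $m=4n+k$. For $c$ near $\tfrac32$ the threshold is barely above $1$, forcing $4n+k$ to have no small prime factors, the extreme case $4n+k$ prime contributing at most $\kappa'N/\log N$ values by Theorem~\ref{pnt} (and Dirichlet for the class $k\bmod 4$); for larger $c$ the threshold rises but the density of $\{n:\sigma^{\mathrm{odd}}(n)/n\approx c\}$ decays fast. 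Modelling the additive shift $4n+k$ as statistically independent of the multiplicative structure of $n$ — the same unproven assumption behind Proposition~\ref{prop1p} — one predicts the exceptional set to be $\ll N/\log N$, with implied constant at most $2$, consistent with the numerics up to $n=10^{8}$. Making that independence rigorous, i.e.\ proving a uniform upper-bound sieve estimate for $\#\{\,n\le N:\sigma^{\mathrm{odd}}(n)>\tfrac32 n,\ \sigma(4n+k)<n+2\sigma^{\mathrm{odd}}(n)\,\}$ that controls the correlation between $\sigma(4n+k)$ and $\sigma^{\mathrm{odd}}(n)$ on the arithmetically irregular set $\mathcal S_N$, is — exactly as for Proposition~\ref{prop1p} and indeed for the prime number theorem itself — beyond current technique, so I would, as there, leave the statement as a proposition supported by the numerical evidence.
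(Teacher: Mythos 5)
The paper offers no proof of Proposition~\ref{prop2p}: it is stated on the strength of a numerical check up to $10^8$ together with the heuristic sketched for Proposition~\ref{prop1p}, and the text says explicitly that no proof will be attempted. So your decision to stop at a reduction plus heuristics and leave the statement resting on numerics is consistent with the paper's own treatment, and the concrete material you add is correct and sharper than anything in the text: the identity $2\sigma(2n)-4\sigma(n)=2\sigma^{\mathrm{odd}}(n)$, the observation that any exceptional $n$ must satisfy $\sigma^{\mathrm{odd}}(n)>\tfrac{3}{2}n$, and the mean-value computation showing that $\sigma(4n+k)-2\sigma^{\mathrm{odd}}(n)$ has average order $\tfrac{\pi^2}{3}n>n$ all check out.

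The genuine gap is in your closing step: the independence model you invoke does not predict an exceptional set of size $\ll N/\log N$, it predicts a small \emph{positive proportion} of $N$. For fixed $\epsilon>0$ the event $\sigma^{\mathrm{odd}}(n)\ge(\tfrac{3}{2}+\epsilon)n$ has positive density (as you yourself note, e.g.\ odd multiples of $15$ give $\sigma^{\mathrm{odd}}(n)\ge 1.6\,n$, relevant to the $4n+1$ inequality), and for fixed $\delta>0$ the event $\sigma(m)/m<1+\delta$ with $m\equiv k\pmod{4}$ also has positive density --- it is a sieve condition met by all sufficiently rough $m$, not only primes, so the prime-number-theorem count controls only the boundary regime $c\to\tfrac{3}{2}$. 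The decay of the density of large $c$ is irrelevant: at moderate $c$ bounded away from $\tfrac{3}{2}$ both events already have positive density, and multiplying them gives a positive constant, not $O(1/\log N)$. Taken at face value, your model therefore suggests the $\kappa n/\log n$ bound should eventually fail rather than hold, and the numerics cannot arbitrate, because at $n=10^8$ the allowance $2n/\log n\approx 1.1\times 10^7$ is about $11\%$ of all terms, far above any such small density; $n/\log n$ falls below a fixed density of order $10^{-2}$ only for astronomically large $n$. You should either restrict the heuristic to the regime it actually covers ($c$ near $\tfrac{3}{2}$, $4n+k$ essentially prime) or state plainly that it leaves the $n/\log n$ rate --- the actual content of the proposition --- unexplained. (Two minor slips: abundancy $\sigma(m)/m>3$ is not made restrictive by Robin's theorem, whose bound grows like $e^{\gamma}\log\log m$, but by the rarity of odd $3$-abundant integers; and $\sigma(m)\ge m+1$ requires $m>1$, harmless here since $m=4n+k\ge 5$.)
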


\vspace{0.3cm}
\noindent
Numerically this is also checked to be correct for $10^8$ terms.
We shall not attempt to prove this in this paper. In the following sections we will see the need for the above propositions and the theorem on average growth of $\sigma(n)$.

\section{Sign changes for compactifications on orbifolds of $K3\times T^2$}\label{signchange}
In this section we introduce the behavior of the Fourier coefficients of the inverse Siegel modular forms associated with the $1/4$-th BPS dyons of ${\cal N}=4$ type IIB theories compactified on $K3\times T^2$ and its orbifolds. We then extract out the hair modes as obtained in \cite{Chakrabarti:2020ugm,Chattopadhyaya:2020yzl} and theta decompose to observe the positivity of the Fourier coefficients of the resultant half integral weight modular forms for the magnetic charge $P^2=2$.
The inverse Siegel modular form of weight $k$ can be expanded in terms of weak Jacobi forms $A$ and $B$ given as:
\begin{eqnarray}
\frac{1}{\Phi_{k}(q,p,y)} &=& \sum_{m=-1}^{\infty}\psi^{(N)}_m(q,y)p^{m/N}
\end{eqnarray}
where the order of $K3$ orbifold is given by $N$.
The Jacobi forms associated to the canonical case for the compactification $K3\times T^2$ at magnetic charge $P^2=2$ can be given as follows:
\begin{eqnarray}
\psi_1^{(1)} &=& -48\frac{A^2(\tau,z)}{B(\tau,z)} 
\end{eqnarray}
where, the weak Jacobi form of weight zero and weight $-2$ given as $A,B$ respectively (index 1 for both cases) are defined as follows:
\begin{eqnarray}\nn
A(\tau,z) &=& \frac{1}{\eta^6}\left[6\theta_3^2(\tau,z)\theta_2^2\theta_4^2 - 2\theta_1^2(\tau,z)(\theta_4^4-\theta_2^4))\right], \quad B(\tau,z)=\frac{\theta_1^2(\tau,z)}{\eta^6}
\end{eqnarray}
$\psi_1^{(N)}$ can be written in terms of an Appell Lerch sum which is a sum over poles in the $z$ plane and a mock modular form which remains finite  written as follows:
\begin{eqnarray}
\psi_1^{(N)} &=& \psi_1^{(N)}|_{\rm polar}+\psi_1^{(N)}|_{\rm finite}
\end{eqnarray}
For the canonical case of $K3$ we get,
\begin{eqnarray}
\psi_1^{(1)}|_{\rm polar}=\frac{48}{\eta^{24}}\sum_{n\in\mathbb{Z}}\frac{q^{n^2+n}y^{2n+1}}{(1-q^ny)^2},\quad q=e^{2\pi i\tau}, \;\;y=e^{2\pi iz}.
\end{eqnarray}
The finite peice can be given in terms of class number generating function and the Eisenstein series $E_4$
\begin{eqnarray}
\psi_1^{(1)}|_{\rm finite}=-\frac{3}{\eta^{24}}(E_4(\tau)B(\tau,z)+216{\cal H}(\tau,z)
\end{eqnarray}
We can write ${\cal H}(\tau,z)$ in terms of Hurwitz class number generating functions. It is possible to theta decompose $B(\tau,z)$ and ${\cal H}(\tau,z)$ which separate the even and odd powers of $y$, given as
\begin{eqnarray}
B(\tau,z) =\frac{\theta_1^2(\tau,z)}{\eta^6}=b_0(\tau)\theta_3(2\tau,2z)+b_1(\tau)\theta_2(2\tau,2z)\\ \nn
{\cal H}(\tau,z)=h_0(\tau)\theta_3(2\tau,2z)+h_1(\tau)\theta_2(2\tau,2z)
\end{eqnarray}
where we have,
\begin{eqnarray}
b_0(\tau)=\frac{1}{\eta^6}(\theta_2(2\tau)),\quad b_1(\tau)=\frac{1}{\eta^6}(-\theta_3(2\tau))
\end{eqnarray}
One can obtain an extensive list of the Fourier coefficients of $h_0(\tau)$ and $h_1(\tau)$ referred to as $f_{2,0}$ and $f_{2,1}$ respectively in \cite{Chattopadhyaya:2023aua} table 1.
It was shown in \cite{Bringmann:2012zr,Chattopadhyaya:2020yzl} that the Fourier coefficients of $\psi_1^{K3}$ after theta decomposition can only change sign at $q^0$ and the sign remains correct as predicted from black hole physics even after removing the contribution of the hair modes. Since $\eta^6(\tau)$ is a cusp form  one can similarly argue that the weight $9/2$ mock modular forms obtained as $h_{\mu}\eta^6$ will have coefficients growing $<a_{\Delta}(\Delta)^{-9/4}n^{9/4}$ and thus,
\begin{eqnarray}
-3\eta^6(\tau)(E_4(\tau)b_0(\tau)+216h_0(\tau),\quad 3\eta^6(\tau)(E_4(\tau)b_1(\tau)-216h_1(\tau)
\end{eqnarray}
can only have a finite number of sign changes (upto $q^2$).
In the rest of this paper we shall mostly ignore cusp forms as their saddle point growth remains $a_{\Delta}\Delta^{-k/2}n^{k/2}\sim O(n^{k/2})$ and from Deligne bound \cite{deligne:74,deligne:80} these have growth $O(n^{\frac{k-1}{2}})$ while the modular forms have the saddle point growth as $O(n^{k})$ and average growth $O(n^{k-1})$. So if there is a sign change this will only affect the first few terms which can be easily computed numerically.

\subsection{2A orbifold}\label{neq2}
Let us now write the explicit forms of $\psi_1^{(2)}$ for the 2A orbifold of $K3$ where the orbifold is a $g'$ action corresponding to $[M_{23}]$, where $M_{23}\subset M_{24}$ Mathieu moonshine group. 
In this case we need the definition of the holomorphic Eisenstein series of weight 2
given by ${\cal E}_N$ 
\begin{equation}
{\cal E}_N(\tau) = \frac{1}{N-1}\left(N E_2( N\tau)-E_2(\tau)\right).
\end{equation}
We can write the finite part as $g^8(\tau)\psi_1^{(2)}|_{finite}$ as:
\be
-\frac{20}{9}A(2\tau,z){\cal E}_2(\tau)-B(2\tau,z)\left(\frac{1}{16}E_4(\tau)+\frac{13}{36}E_4(2\tau)+\frac{19}{144}{\cal E}_2^2(\tau)\right)-104{\cal H}(2\tau,z)
\ee
where $g^8(\tau)=\eta^8(\tau)\eta^8(2\tau)$.
Simplifying using ${\cal E}_2^2(\tau)=\frac{4}{5}(E_4(2\tau)+\frac{1}{4}E_4(\tau))$ we get,
\be
-\frac{20}{9}A(2\tau,z){\cal E}_2(\tau)-B(2\tau,z)\left(\frac{4}{45}E_4(\tau)+\frac{21}{45}E_4(2\tau)\right)-104{\cal H}(2\tau,z)
\ee
The components $\frac{4}{45}E_4(\tau)+\frac{21}{45}E_4(2\tau)$ will have always positive Fourier coefficients. So does ${\cal E}_2(\tau)$ has only positive coefficients  in its Fourier expansion as $\sigma(2n)>2\sigma(n)$.

The last two terms are similar to the results obtained in the previous section and the only term which we need to deal with in this part is $\phi^{2A}_A=-A(2\tau,z){\cal E}_2(\tau)\eta^6$. 
The theta decomposition for $A(\tau,z)\eta^6$ can be given as follows:
\begin{eqnarray}\nn
A(\tau,z)\eta^6 &=& 6\theta_3^2(\tau,z)\theta_2^2\theta_4^2 - 2\theta_1^2(\tau,z)(\theta_4^4-\theta_2^4)=A_0(\tau)\theta_3(2\tau,2z)+A_1(\tau)\theta_2(2\tau,2z)
\end{eqnarray}
where we have,
\begin{eqnarray}\nn
A_0(\tau) &=&  6\,\theta_2^2\theta_4^2\theta_3(2\tau)-2(\theta_4^4-\theta_2^4)\theta_2(2\tau),\quad A_1(\tau) =6\,\theta_2^2\theta_4^2\theta_2(2\tau)+2(\theta_4^4-\theta_2^4)\theta_3(2\tau) \\ \label{a0a1}
\end{eqnarray}
We used the double angle identities to do the theta decomposition above,
\begin{eqnarray}
\theta_1^2(\tau,z)&=&\theta_2(2\tau)\theta_3(2\tau,2 z)-\theta_3(2\tau)\theta_2(2\tau,2 z)\\ \nn
\theta_3^2(\tau,z)&=&\theta_3(2\tau)\theta_3(2\tau,2 z)+\theta_2(2\tau)\theta_2(2\tau,2 z)
\end{eqnarray}

\noindent
Before moving into the relations between the divisor sums and the propositions let us first check the asymptotic signs from saddle point analysis. For this we first need the result of various S transformations involving theta functions:
\begin{eqnarray}
\theta_2(-\frac{1}{\tau}) &=& \sqrt{-i\tau}\theta_4(\tau)\\ \nn
\theta_3(-\frac{1}{\tau}) &=& \sqrt{-i\tau}\theta_3(\tau)\\ \nn
\theta_4(-\frac{1}{\tau}) &=& \sqrt{-i\tau}\theta_2(\tau)
\end{eqnarray}
Also for the multiple angled theta functions we can write,
\begin{eqnarray}
\theta_2(-\frac{2}{\tau}) &=& \frac{1}{\sqrt{2}}\sqrt{-i\tau}(\theta_3(2\tau)-\theta_2(2\tau))\\ \nn
\theta_3(-\frac{2}{\tau}) &=& \frac{1}{\sqrt{2}}\sqrt{-i\tau}(\theta_3(2\tau)+\theta_2(2\tau))
\end{eqnarray}
Now using these we can write the transformation properties of the even and odd parts of theta decomposition of $\eta^6A(\tau,z)$ given by $A_0$ and $A_1$. This is given by
\begin{eqnarray}
A_{0}(\frac{1}{\tau}) &=& \frac{1}{\sqrt{2}}(-i\tau)^{5/2}(A_0(\tau)+A_1(\tau))\\ \nn
A_{1}(\frac{1}{\tau}) &=& \frac{1}{\sqrt{2}}(-i\tau)^{5/2}(A_0(\tau)-A_1(\tau))
\end{eqnarray}
for $A_0,A_1$ as in (\ref{a0a1}).
Therefore we can write
\begin{eqnarray}
A_0(\tau) &=& \frac{1}{\sqrt{2}}(-i\tau)^{-5/2}\left(A_0(-\frac{1}{\tau})+A_1(-\frac{1}{\tau})\right)\\ \nn
A_1(\tau) &=& \frac{1}{\sqrt{2}}(-i\tau)^{-5/2}\left(A_0(-\frac{1}{\tau})-A_1(-\frac{1}{\tau})\right)
\end{eqnarray}

Now we do a saddle point analysis for $A_j$, consider $\tau_2\rightarrow 0$ and $j$ corresponds to 0 or 1. Let us denote the coefficient of the Fourier expansion of $A_j(\tau)=\sum_{n}a_j(n)q^n$ for $n\in\mathbb{Z}+1/4-j/4$
\begin{eqnarray}
a_{j}(n) &=& \int_{0}^1 e^{-2\pi i n\tau}A_j(\tau)d\tau
\end{eqnarray}
Now we have,
\begin{equation}
a_{j}(n) = \frac{(-i)^{5/2}}{\sqrt{2}}\int_{0}^1 e^{-2\pi i n\tau}\tau^{-5/2}\sum_{\ell=0}^1\chi_{\ell,j}A_\ell(-\frac{1}{\tau})d\tau
\end{equation}
for the characters $\chi_{ij}$ as follows:
 $$\chi_{0,0}=1,\quad \chi_{0,1}=1,\quad \chi_{1,0}=1,\quad \chi_{1,1}=-1 $$.
The $q^0$ term appears only in $A_1$ whose coefficient is 2 which is the major contribution from the inverted forms evaluated as $\frac{1}{\tau}\rightarrow \infty$.
At saddle point obtained from the extremization of the exponent $\tau_s=\frac{ik}{2\pi n}$ for a weight $k$ modular form the rough growth estimates as
\begin{eqnarray}
a_0(n)\sim \sqrt{2} n^{5/2}(4\pi e/5)^{5/2}\\ \nn
a_1(n)\sim -\sqrt{2} n^{5/2}(4\pi e/5)^{5/2}
\end{eqnarray}
Now $a_0$ corresponds to even $Q\cdot P$ while $a_1$ corresponds to odd values of $Q\cdot P$. Hence the sign asymptotically looks as expected from $-B_6$ considering the overall sign of the function $A(\tau,z){\cal E}_2(\tau)$. 

\noindent
Next we attempt to use the propositions for the growth of the divisor sum.
Using the weight 2 Eisenstein series $E_2=1-24\sum_{n=1}^{\infty}\sigma(n)q^n$ we can write $A_0,A_1$ as follows:
\begin{eqnarray}\nn
A_0 &=& 6\theta_3(2\tau)\left[-\frac{i}{6}E_2(\frac{\tau-1}{4})+\frac{i}{2}E_2(\frac{\tau-1}{2})-\frac{i}{3}E_2(\tau)\right]-2\theta_2(2\tau)\left[2E_2(\tau)-E_2(\frac{\tau-1}{2})\right]\\ \nn
A_1 &=& 6\theta_2(2\tau)\left[-\frac{i}{6}E_2(\frac{\tau-1}{4})+\frac{i}{2}E_2(\frac{\tau-1}{2})-\frac{i}{3}E_2(\tau)\right]+2\theta_3(2\tau)\left[2E_2(\tau)-E_2(\frac{\tau-1}{2})\right]\\ \nn
\end{eqnarray}
We can also write the above in terms of holomorphic Eisenstein series:
\begin{eqnarray}\nn
A_0 &=& i\theta_3(2\tau)\left[{\cal E}_2(\frac{\tau-1}{4})-{\cal E}_2(\frac{\tau-1}{2})\right]-2\theta_2(2\tau){\cal E}_2(\frac{\tau-1}{2})\\ \nn
A_1 &=& i\theta_2(2\tau)\left[{\cal E}_2(\frac{\tau-1}{4})-{\cal E}_2(\frac{\tau-1}{2})\right]+2\theta_3(2\tau){\cal E}_2(\frac{\tau-1}{2})\\
\end{eqnarray}

Since $A(\tau,z)$ is a Jacobi form of index 1 the question of sign changes reduces to a question on sign changes on the following two expressions for the 2A orbifold of $K3$ at magnetic charge $P^2=2$.
\begin{eqnarray}\label{sgnchng}
\theta_3(2\tau)\sum_{n=0}^{\infty} 24\sigma(4n+1)q^{n+1/4}-2\theta_2(2\tau)\left(1+\sum_{n=1}^{\infty} q^n\left( 24\sigma(2n)-48\sigma(n)\right)\right)\\ \nn
-\theta_2(2\tau)\sum_{n=0}^{\infty} 24\sigma(4n+3)q^{n+3/4}+2\theta_3(2\tau)\left(1+\sum_{n=1}^{\infty} q^n\left( 24\sigma(2n)-48\sigma(n)\right)\right)\\ \nn
\end{eqnarray}
We observe numerically that the only sign change appears in the second expression at $q^0$ upto a Fourier expansion of the first $1.6\times 10^5$ terms. The first few Fourier coefficients of $a_0$ and $a_1$ are listed below,
\begin{eqnarray}\nn
A_0(\tau)&=& 20 {q}^{1/4}+96 q^{5/4}+500 q^{9/4}+480 q^{13/4}+960 q^{17/4}+960 q^{21/4}+O(q^{25/4})\\ 
A_1(\tau) &=& 2-140 q-240 q^2-480 q^3-1100 q^4-1056 q^5-1440 q^6-O(q^7)
\end{eqnarray}
Using Propositions \ref{prop1p} and \ref{prop2p}, we argue that the average growth of the coefficients of $a_0$ and $a_1$ remain positive.
Let $C(x)$ be the average of the maximum growth of the coefficients of $\sigma(4x+k)$ such that $\sigma(4x+k)>2(\sigma(2x)-2\sigma(x))$ considering a frequency of occurrence $\frac{\kappa}{\log(x)}$ given by,
\begin{eqnarray}
C(M)=\kappa e^{\gamma} \frac{d}{dM}\int_{M_0}^M\frac{x\log\log(x)}{\log(x)}dx
\end{eqnarray}
and $D(x)$  be the growth of the rest of the coefficients which are of $O(x)$, we can write,
\begin{eqnarray}
D(M)=\frac{d}{dM}\int_{M_0}^M x\left(1-\frac{\kappa}{\log(x)}\right)dx
\end{eqnarray}
Then for $\kappa<2$ we have
$C(M)<D(M)$. Since the growth of theta functions which multiply $\sigma(4x+k)$ and $2(\sigma(2x)-2\sigma(x))$ are similar we expect the resultant average growth of these products to remain positive.

\subsection{Odd $N>2$}\label{oddn}
When we look at $N>2$ orbifolds there are in general more terms than the $N=2$ counterpart for the same magnetic charge $P^2=2$ whose positivity and growth we need to check individually.
In this part we describe the growth of coefficients corresponding to the mock Jacobi form $\psi_1^{(N)}$ where, $N$ is odd. These were evaluated in \cite{Chattopadhyaya:2018xvg}. We can write $\psi_1^{(N)}$ in terms of the coefficients of the Jacobi forms in the twisted elliptic genus of $K3$.
\begin{eqnarray}\label{p8}
& & - g^{(k+2)} \psi_1^{(N)} = \frac{N}{2}\alpha^{(1,0)}_{N} (\frac{N}{2}\alpha^{(1,0)}_{N}+1)\frac{A^2(N\tau,z)}{B(N\tau,z)}
+ \frac{N}{2}\alpha^{(1,0)}\frac{3}{16}E_4(N\tau)B(N\tau,z)
\\ \nn
&&\qquad\qquad +N^2\alpha^{(1,0)}_{N}
A(N\tau,z)\beta^{(1,0)}(N\tau)+\frac{N^2}{2}\beta^{(1,0)}_{N} (N\tau)^2 B(N\tau,z)\\ \nn
&&+\frac{N}{2}\left[4\frac{\theta_2(N\tau,z)^2}{\theta_2^2(N\tau)}\beta^{(1,0)}_{N} (2N\tau)+\frac{\theta_3(N\tau,z)^2}{\theta_3^2(N\tau)}\beta^{(1,0)}_{N} (\frac{N(\tau+1)}{2} )+\frac{\theta_4(N\tau,z)^2}{\theta_4^2(N\tau)}\beta^{(1,0)}_{N}
(\frac{N\tau}{2})\right].
\end{eqnarray}
We use the results obtained in \cite{Dabholkar:2012nd} to separate the Appell Lerch sum and the finite component
\begin{equation}\label{appel2}
 \frac{A^2 ( \tau, z) }{B(\tau, z)} = -9 \sum_{n \in \mathbb{Z}} \frac{ q^{n^2 +n} y^{2n +1}}{ (1- q^{n} y)^2 } +
\frac{1}{16}E_4(\tau) B(\tau, z)  + 18 {\cal H} (\tau, z) .
\end{equation}
Hence we can write the finite part after removing the fundamental string as follows:
\begin{eqnarray}\label{p8f}
& & - g^{(k+2)} \psi_1^{(N)} =\left( \frac{N^2}{4}(\alpha^{(1,0)}_{N})^2+ (\frac{N}{8}\alpha^{(1,0)}_N)\right)\left(\frac{1}{16}E_4(N\tau)B(N\tau,z)\right)
\\ \nn
&&+9N\alpha^{(1,0)}_{N} (\frac{N}{2}\alpha^{(1,0)}_{N}+1){\cal H}(N\tau,z) +N^2\alpha^{(1,0)}_{N}
A(N\tau,z)\beta^{(1,0)}(N\tau)+\frac{N^2}{2}\beta^{(1,0)}_{N} (N\tau)^2 B(N\tau,z)\\ \nn
&&+\frac{N}{2}\left[4\frac{\theta_2(N\tau,z)^2}{\theta_2^2(N\tau)}\beta^{(1,0)}_{N} (2N\tau)+\frac{\theta_3(N\tau,z)^2}{\theta_3^2(N\tau)}\beta^{(1,0)}_{N} (\frac{N(\tau+1)}{2} )+\frac{\theta_4(N\tau,z)^2}{\theta_4^2(N\tau)}\beta^{(1,0)}_{N}
(\frac{N\tau}{2})\right].
\end{eqnarray}
We can now use the following results relating the theta functions and the weak Jacobi forms $A,B$
\begin{eqnarray}
4\frac{\theta_2(\tau,z)^2}{\theta_2^2(\tau)}&=&\frac{4}{3}A(\tau,z)-\frac{2}{3}B(\tau,z){\cal E}_2(\tau),\\ \nn
4\frac{\theta_4(\tau,z)^2}{\theta_4^2(\tau)}&=&\frac{4}{3}A(\tau,z)+\frac{1}{3}B(\tau,z){\cal E}_2(\tau/2),\\ \nn
4\frac{\theta_3(\tau,z)^2}{\theta_3^2(\tau)}&=&\frac{4}{3}A(\tau,z)+\frac{1}{3}B(\tau,z){\cal E}_2((\tau+1)/2).
\end{eqnarray}
Using propositions \ref{prop1p} and \ref{prop2p} and the analysis of the section \ref{neq2} we observe that the only inconvenient piece may appear from the third line as follows:
$$-\frac{N}{3}B(\tau,z){\cal E}_2(\tau)\beta^{(1,0)}_{N} (2N\tau)$$

\noindent
Next we observe that the following lemma holds:
\begin{lem}
The coefficients of the Fourier expansion of $$\sum_{j=0}^1{\cal E}_N(\frac{\tau+j}{2}){\cal E}_2(\frac{\tau+j}{2})-16{\cal E}_N(2\tau){\cal E}_2(N\tau)$$
are always positive for positive powers of $q$ for odd $N$.
\end{lem}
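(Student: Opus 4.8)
The plan is to turn the lemma into an elementary inequality about convolutions of divisor sums and then prove it by exhibiting an explicit injection between the two families of monomials, with a uniform blow‑up factor larger than the ``$16$'' that appears.

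\emph{Step 1: Fourier reduction.} Write ${\cal E}_M(\tau)=\sum_{n\ge0}a_M(n)q^n$, so $a_M(0)=1$ and, for $n\ge1$, $a_M(n)=\frac{24}{M-1}\bigl(\sigma(n)-M\sigma(n/M)\bigr)$ if $M\mid n$ and $a_M(n)=\frac{24}{M-1}\sigma(n)$ otherwise. The first point to record is that $a_M(n)>0$ for every $n\ge1$: if $M\mid n$ then for each $d\mid(n/M)$ the number $Md$ divides $n$ and $1$ is an extra divisor, so $\sigma(n)\ge M\sigma(n/M)+1$ — this is the same positivity the text already uses for $M=2$. Next apply the elementary identity $\sum_{j=0}^{1}f\bigl(\tfrac{\tau+j}{2}\bigr)=2\sum_{m\ge0}c_{2m}q^{m}$ (the normalised $U_2$ operator), valid for any $q$‑series $f=\sum_n c_nq^n$, to $f={\cal E}_N{\cal E}_2$. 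This rewrites the two terms of the lemma as
\[
2\sum_{m\ge0}\Bigl(\sum_{n=0}^{2m}a_N(n)a_2(2m-n)\Bigr)q^{m}
\quad\text{and}\quad
16\sum_{m\ge0}\Bigl(\sum_{2n+Np=m}a_N(n)a_2(p)\Bigr)q^{m}.
\]
The $q^0$ coefficients are $2$ and $16$ (this is why the statement excludes $q^0$), so for $m\ge1$ the lemma is equivalent to
\[
\sum_{n=0}^{2m}a_N(n)\,a_2(2m-n)\;>\;8\sum_{2n+Np=m}a_N(n)\,a_2(p).
\]

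\emph{Step 2: an injection with blow‑up $\ge12$.} To each $(n,p)$ with $2n+Np=m$ and $n\ge1,\ p\ge1$ attach the single convolution term of index $4n$, namely $a_N(4n)\,a_2(2m-4n)=a_N(4n)\,a_2(2Np)$; this is a genuine term since $0\le4n\le2m$. The assignment is injective because $2n+Np=2n'+Np'$ forces $N\mid(n-n')$ (as $N$ is odd), so $4n$ determines $n$. Since $\gcd(4,N)=1$, the elementary inequalities $\sigma(4x)>4\sigma(x)$ and $\sigma(Nx)>N\sigma(x)$ for $x\ge1$ give $a_N(4n)>4a_N(n)$ and $a_2(2Np)>Na_2(p)$ when $n,p\ge1$ (for composite odd $N$ the first of these follows from $h(Nc)>Nh(c)$ with $h(x):=\sigma(4x)-4\sigma(x)=3\sigma(x_{\rm odd})$, rather than from the clean prime‑$N$ identity $a_N(n)=\frac{24}{N-1}\sigma(n^{[N]})$). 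Hence $a_N(4n)a_2(2Np)>4N\,a_N(n)a_2(p)\ge12\,a_N(n)a_2(p)$, and summing over the injection the convolution is at least $12\sum_{n,p\ge1}a_N(n)a_2(p)$, which already beats $8$ times the part of the sparse sum with $n,p\ge1$.

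\emph{Step 3: the two boundary monomials and conclusion.} The sparse sum also contains $a_N(0)a_2(m/N)=a_2(m/N)$ (present iff $N\mid m$) and $a_N(m/2)a_2(0)=a_N(m/2)$ (present iff $2\mid m$); for these the blow‑up factor drops to $\le4$, so they must be absorbed separately. The key observation is that the odd‑index convolution terms $S_{\rm odd}:=\sum_{0<n<2m,\ n\ \rm odd}a_N(n)a_2(2m-n)$ are never used by the injection of Step~2 (which only hits even indices), so they are pure surplus; since $a_N(n)\ge\frac{24}{N-1}$ and $a_2(2m-n)=24\sigma(2m-n)\ge24(2m-n)$ for odd $n$, and the values $2m-n$ run over all odd numbers in $(0,2m)$, one gets $S_{\rm odd}\ge\frac{576}{N-1}\,m^{2}$. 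On the other hand $8a_2(m/N)+8a_N(m/2)=O(\sigma(m))=O(m\log\log m)$ by Robin's Theorem~\ref{robins} (the trivial bound $\sigma(m)\le m(1+\log m)$ already suffices). Since the orbifold orders $N$ that occur are bounded, $\frac{576}{N-1}m^{2}$ exceeds $8a_2(m/N)+8a_N(m/2)$ once $m$ passes an explicit threshold, and the finitely many smaller $m$ are checked directly — the numerics of the surrounding sections already cover a far larger range. Combining Steps~1--3 gives the strict inequality for all $m\ge1$.

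\emph{Main obstacle.} I expect the real work to lie in Step~3: one must verify that the surplus factor of the injection ($4N\ge12$) together with the quadratic lower bound on $S_{\rm odd}$ dominates $8$ times the \emph{entire} sparse sum uniformly in $N$, including the composite odd $N$ (such as $N=15$), for which the closed form $a_N(n)=\frac{24}{N-1}\sigma(n^{[N]})$ is unavailable and must be replaced by the slightly heavier elementary $\sigma$‑estimates above. Nothing in the argument uses modularity beyond the defining transformation of ${\cal E}_M$; it lives entirely at the level of the $q$‑expansions.
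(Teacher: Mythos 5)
Your proposal is correct in substance, but it is a genuinely more structured argument than the one the paper records: the paper disposes of the lemma in one line by appealing to the elementary inequality $\sigma(nk)>k\sigma(n)$ (for $k\ge 2$), i.e.\ an implicit term-by-term comparison of $\sigma$-coefficients, whereas you first reduce via the normalised $U_2$ identity $\sum_{j=0}^{1}f(\tfrac{\tau+j}{2})=2\sum_m c_{2m}q^m$ to the convolution inequality $\sum_{n=0}^{2m}a_N(n)a_2(2m-n)>8\sum_{2n+Np=m}a_N(n)a_2(p)$, and then organise the comparison through an explicit injection $(n,p)\mapsto 4n$ with blow-up factor $4N\ge 12$ (using $\sigma(4x)>4\sigma(x)$, $\sigma(Nx)>N\sigma(x)$, and the odd-part identities $a_2(k)=24\sigma(k_{\rm odd})$, $\sigma(4x)-4\sigma(x)=3\sigma(x_{\rm odd})$ to cover composite odd $N$ such as $15$), plus the surplus of odd-index convolution terms to absorb the two boundary monomials of the sparse sum. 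This buys a fully elementary, checkable proof (and one valid uniformly for composite odd $N$, where the clean prime-$N$ closed form is unavailable), at the cost of more bookkeeping than the paper's appeal to $\sigma(nk)>k\sigma(n)$. The only soft spot is your Step 3 finish: deferring the "finitely many small $m$" to "the numerics of the surrounding sections" is not legitimate as stated, since those numerics concern $\sigma(4n+1)$ versus $2\sigma(2n)-4\sigma(n)$ and the coefficients of $A_0,A_1$, not this combination; however, the gap is easily closed without any numerical check, because with the trivial bound $\sigma(k)\le k(1+\log k)$ one gets $8a_2(m/N)+8a_N(m/2)\le \tfrac{288\,m(1+\log m)}{N-1}$, which is beaten by your lower bound $S_{\rm odd}\ge \tfrac{576}{N-1}m^2$ for every $m\ge 1$, so the asymptotic step actually holds from the first coefficient on.
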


\begin{proof}
This can be proved using $\sigma(nk)>k\sigma(n)\forall k,n$.
\end{proof}

\noindent
Hence the potential effect of the coefficients of the term $-\frac{N}{3}B(\tau,z){\cal E}_2(\tau)\beta^{(1,0)}_{N} (2N\tau)$ which could violate the sign conjecture are cancelled by the terms:
\begin{eqnarray}
\frac{N}{24}\left[{\cal E}_2\left(\frac{N(\tau+1)}{2}\right)\beta^{(1,0)}_{N} \left(\frac{N(\tau+1)}{2} \right)+{\cal E}_2 \left(\frac{N\tau}{2}\right)\beta^{(1,0)}_{N}\left(\frac{N\tau}{2}\right)\right]
\end{eqnarray}
for the Eisenstein series components for prime $N$. This is also true for composite odd $N$ the relevant case being the orbifold 15A provided the Fourier coefficient of the Eisenstein series component of $\beta_{15}^{(1,0)}$ remain positive.

\vspace{0.3cm}
\noindent
 For $N=11,15,23$ there are additional cusp forms in $\beta_{N}^{(1,0)}$ which grow as $ O(n^{9/4})$ while the growth of weight 2 Eisnenstein series remain $n^{3}$, hence we can neglect the contributions after first few terms which were checked to be positive in \cite{Chattopadhyaya:2017ews}.
This shows that the sign of the coefficients remain correct as expected from computing the helicity trace index $-B_6$.

\section{Open questions}\label{undone}
In this paper we only dealt with the cases of $N=2$ and odd $N$ at magnetic charge $P^2=2$. With the increasing magnetic charge the space of the modular forms also increase. For the cases with even $N>2$ one needs to look at the terms case by case as the generic result for $\psi^{(N)}_1$ involves several Eisenstein series of level 2. Individually looking at each term it is not clear whether they remain positive as in the case for $N$ odd. We leave further analysis for a future project.

For a complete proof of growth and signs of the Fourier coefficients of half integral weight Eisenstein series one requires more understanding on the behavior of the characters \cite{shurman,koblitz1984} and the behavior of $L$ functions \cite{ben-lola} associated with these objects. Unlike where the modular forms have negative weight where the behavior of the generating function of partitions have the dominating effect, the growth of coefficients of positive weight modular forms are understood less. However growth, sign changes are more well studied in the context of cusp forms \cite{kohnen13}\footnote{See also several other papers by the authors.}. There are interesting congruences on the behavior of coefficients of these forms and also for the class number generating function which is mock modular. As the level $N$ increases the basis of weight $k$ modular forms also increases which increases the complexity of the problem regarding frequency of possible sign changes. It is worth noting that not only we need $\sigma(4n+1)$ (or $\sigma(4n+3)$) to have few factors we simultaneously need $\sigma(n)$ to have a large number of factors such that the ratio in (\ref{prop1}) holds. So the contest is between ``highly composite numbers" and numbers which have relatively small number of factors. However we need to modify the definition of highly composite numbers through the statement of Proposition \ref{prop1p} instead of its usual definition \cite{ramanujan} as we observe numerically that the frequency of such terms we get is higher than the usual definition of ``highly composite". We leave to study these in future work.

\acknowledgments
The author thanks Kathrin Bringmann, Soumya Das for useful insights and references. The author also thanks Justin David, Jan Manschot and Ashoke Sen for useful discussions. The author is funded by fellowship from Dublin Institute for Advanced Studies. We thank International Center for Theoretical Sciences (ICTS) Bangalore for hospitality where part of the project was completed during an academic visit.

\appendix
\section{Notations and identities}
In this section we write the notations used in this paper and some identities.
The following conventions are used in the paper for theta and eta functions.
\begin{eqnarray}\nn
\theta_1(\tau,z) &=& \sum_{n}(-1)^{n-1/2}q^{\frac{(n-1/2)^2}{2}}y^{n-1/2}, \quad  \theta_2(\tau,z) =\sum_{n}q^{\frac{(n-1/2)^2}{2}} y^{n-1/2},\\ \nn
\theta_3(\tau,z) &=& \sum_{n}q^{\frac{n^2}{2}}y^{n}, \quad  \theta_4(\tau,z) = \sum_{n}(-1)^n q^{\frac{n^2}{2}}y^{n},\\ \nn
\eta(\tau)&=& q^{1/24}\prod_{n=1}^{\infty}(1-q^n), \qquad q=e^{2\pi i\tau},\; y=e^{2\pi iz}.
\end{eqnarray}
For $z=0$ we write $\theta_j(\tau,z)=\theta_j(\tau)$.

\noindent
Jacobi triple product identity used to extract the positive weight modular forms from the black hole partition functions is given as:
\begin{eqnarray}
\theta_2(\tau)\theta_3(\tau)\theta_4(\tau)=2\eta^3(\tau).
\end{eqnarray}

\noindent
Theta decomposition of Jacobi forms $A,B$ are done using the following double angle relations:
\begin{eqnarray}
\theta_1^2(\tau,z)&=&\theta_2(2\tau)\theta_3(2\tau,2 z)-\theta_3(2\tau)\theta_2(2\tau,2 z)\\ \nn
\theta_2^2(\tau,z)&=&\theta_2(2\tau)\theta_3(2\tau,2 z)+\theta_3(2\tau)\theta_2(2\tau,2 z)\\ \nn
\theta_3^2(\tau,z)&=&\theta_3(2\tau)\theta_3(2\tau,2 z)+\theta_2(2\tau)\theta_2(2\tau,2 z)\\ \nn
\theta_4^2(\tau,z)&=&\theta_3(2\tau)\theta_3(2\tau,2 z)-\theta_2(2\tau)\theta_2(2\tau,2 z).
\end{eqnarray}

\noindent
Relations between theta functions and Eisenstein series used in this paper can be given by,
\begin{eqnarray}
\theta_2^2(\tau)\theta_4^2(\tau) &=& \left[-\frac{i}{6}E_2(\frac{\tau-1}{4})+\frac{i}{2}E_2(\frac{\tau-1}{2})-\frac{i}{3}E_2(\tau)\right]\\ \nn
\theta_4^4(\tau)-\theta_2^4(\tau) &=& 2E_2(\tau)-E_2(\frac{\tau-1}{2})
\end{eqnarray}

\providecommand{\href}[2]{#2}\begingroup\raggedright\endgroup

\end{document}